\newtheorem{lemma}{Lemma}
\DeclareMathOperator*{\argmin}{argmin}
\newcommand{\eat}[1]{}
\newcommand{\llb}{\llbracket}
\newcommand{\rrb}{\rrbracket}
\newcommand{\x}{\mathbf{x}}
\newcommand{\w}{\mathbf{w}}
\newcommand{\Z}{\mathbf{Z}}
\newcommand{\R}{\mathbb{R}}
\newcommand{\DCal}{\mathcal{D}}
\newcommand{\UCal}{\mathcal{U}}
\newcommand{\alphabm}{{\bm{\alpha}}}
\newcommand{\betabm}{{\bm{\beta}}}
\newcommand{\bmu}{\bm{\mu}}
\newcommand{\upthetabm}{{\bm{\uptheta}}}
\newcommand{\zero}{\mathbf{0}}
\newcommand{\widebar}[1]{\mkern 1.5mu\overline{\mkern-2.5mu#1\mkern-1.5mu}\mkern 1.5mu}
\newcommand{\eg}{e.g.,\ }
\newcommand{\ie}{i.e.,\ }
\newcommand{\citet}[1]{\citeauthor{#1}\ \shortcite{#1}}
\newcommand{\citep}{\cite}
\begin{document}
\title{Cold-start Playlist Recommendation with Multitask Learning}
\author{Dawei Chen$^{* \dagger}$, Cheng Soon Ong$^{\dagger *}$, Aditya Krishna Menon$^{*}$ \\
$^*$The Australian National University, $^\dagger$Data61, CSIRO, Australia \\
\{dawei.chen, chengsoon.ong, aditya.menon\}@anu.edu.au
}

\maketitle

\begin{abstract}
Playlist recommendation involves producing a set of songs that a user might enjoy.
We investigate this problem in three %different 
cold-start scenarios:
(i) \emph{cold playlists}, 
where we recommend songs to form new personalised playlists for an existing user;
(ii) \emph{cold users},
where we recommend songs to form new playlists for a new user; and
(iii) \emph{cold songs}, where we recommend newly released songs to extend users' existing playlists.
We propose a flexible multitask learning method to deal with all three settings.
The method learns from user-curated playlists,
and encourages songs in a playlist 
to be ranked higher than those that are not
by minimising a %the Bottom-Push
bipartite ranking loss.
Inspired by an equivalence between bipartite ranking and binary classification,
we show how one can efficiently approximate an optimal solution of the multitask learning objective by 
minimising a classification loss.
Empirical results on two real playlist datasets show the proposed approach has good performance 
for cold-start playlist recommendation.

\end{abstract}

\section{Introduction}
\label{sec:intro}
Online music streaming services (e.g., Spotify, Pandora, Apple Music) % Google Play Music, Amazon Music) 
are playing an increasingly important role in the digital music industry.
A key ingredient of these services is the ability to automatically recommend songs to help users explore large collections of music.
Such recommendation is often in the form of a \emph{playlist}, 
which involves a (small) set of songs.
We investigate the problem of recommending songs to form personalised playlists 
in \emph{cold-start} scenarios, %settings,
where there is no historical data for either users or songs.
Conventional recommender systems for books or movies~\citep{Sarwar:2001,Netflix}
typically learn a score function via matrix factorisation~\citep{Koren:2009},
and recommend the item that achieves the highest score.
This approach is not suited to %deal with 
cold-start settings
due to the lack of interaction data. %for either users or items.
Further, in playlist recommendation,
one has to recommend a subset of a large collection of songs instead of only one top ranked song.
Enumerating all possible such subsets is intractable;
additionally,
it is likely that more than one playlist is satisfactory, since
users generally maintain more than one playlist when using a music streaming service,
which leads to challenges in standard supervised learning.

We formulate playlist recommendation as a multitask learning problem.
Firstly, we study the setting of recommending %a set of songs to form %a new playlist for a user
personalised playlists for a user
by exploiting the (implicit) preference %of the given user 
from her existing playlists.
Since we do not have any contextual information about the new playlist, 
we call this setting \emph{cold playlists}.
We find that learning from a user's existing playlists %can significantly 
improves the accuracy of recommendation compared to 
suggesting popular songs from familiar artists.
We further consider the setting of \emph{cold users} (\ie new users),
where we recommend playlists for new users %by learning from playlists of existing users
given playlists from existing users.
We find it challenging to improve recommendations beyond simply ranking songs according to their popularity 
if we know nothing except the identifier of the new user, %about the user,
which is consistent with previous 
discoveries~\cite{mcfee2012million,bonnin2013evaluating,bonnin2015automated}.
However, improvement can still be achieved if we know a few simple attributes (\eg age, gender, country) % etc.)
of the new users.
Lastly, we investigate the setting of recommending newly released songs (\ie \emph{cold songs}) 
to extend users' existing playlists. 
We find that the set of songs in a playlist are particularly %especially 
helpful in guiding 
the selection of new songs to be added to the given playlist.

We propose a novel multitask learning method that %to %which can 
can deal with %handle
playlist recommendation in all three cold-start settings.
It optimises a bipartite ranking loss~\cite{Freund:2003,Agarwal:2005}
that encourages songs in a playlist to be ranked higher than those that are not.
This results in a convex optimisation problem with an enormous number of constraints.
Inspired by an equivalence between bipartite ranking and binary classification,
we efficiently approximate an optimal solution of the constrained objective 
by minimising an unconstrained classification loss.
We present experiments on two real playlist datasets, 
and demonstrate that our multitask learning approach improves over existing strong baselines 
for playlist recommendation in cold-start scenarios.

\section{Multitask learning for recommending playlists}
\label{sec:method}

\begin{figure*}[!t]
    \centering
    \begin{subfigure}[t]{2.3in}
        \centering
        \includegraphics[height=1.6in]{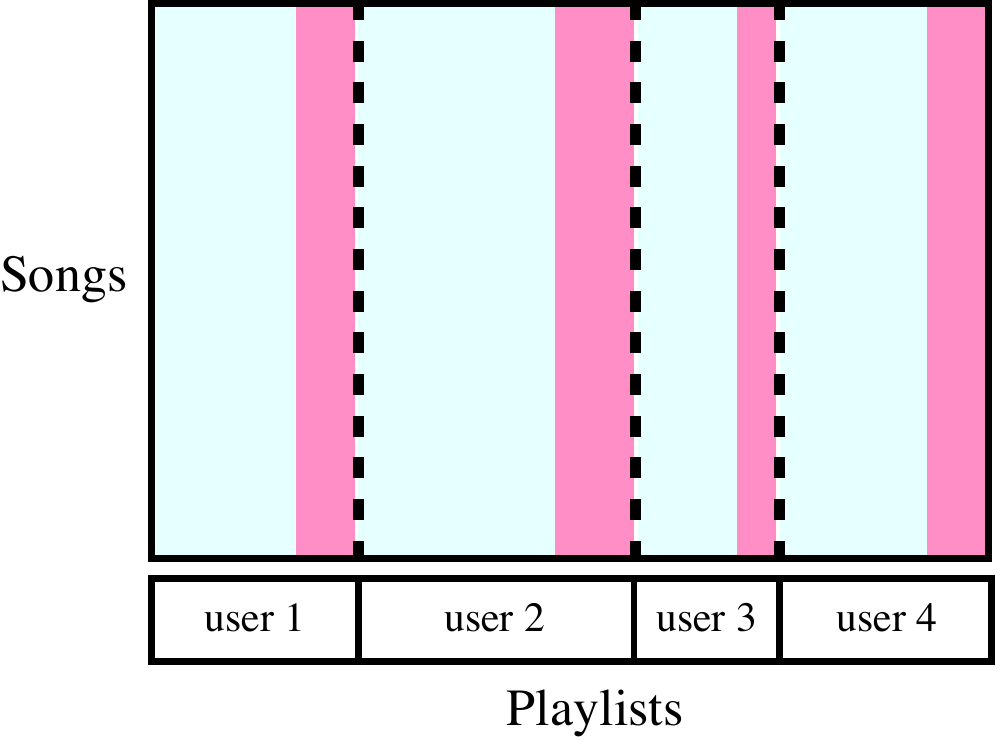}
        \caption{Cold Playlists}
    \end{subfigure}
    \begin{subfigure}[t]{2.3in}
        \centering
        \includegraphics[height=1.6in]{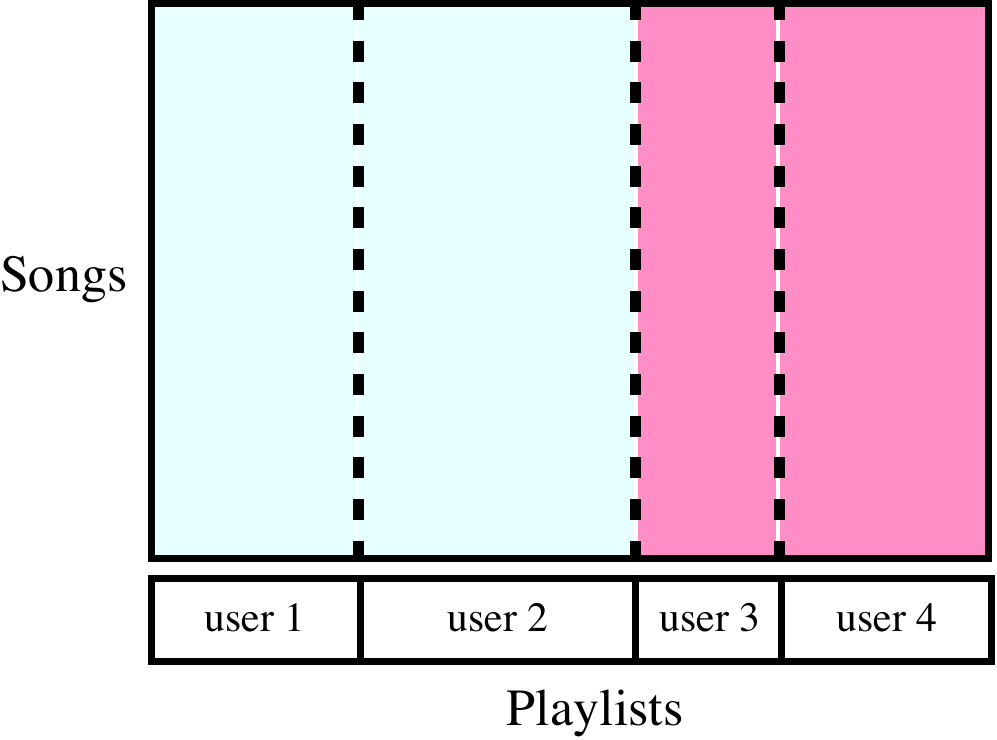}
        \caption{Cold Users}
    \end{subfigure}
    \begin{subfigure}[t]{2.3in}
        \centering
        \includegraphics[height=1.6in]{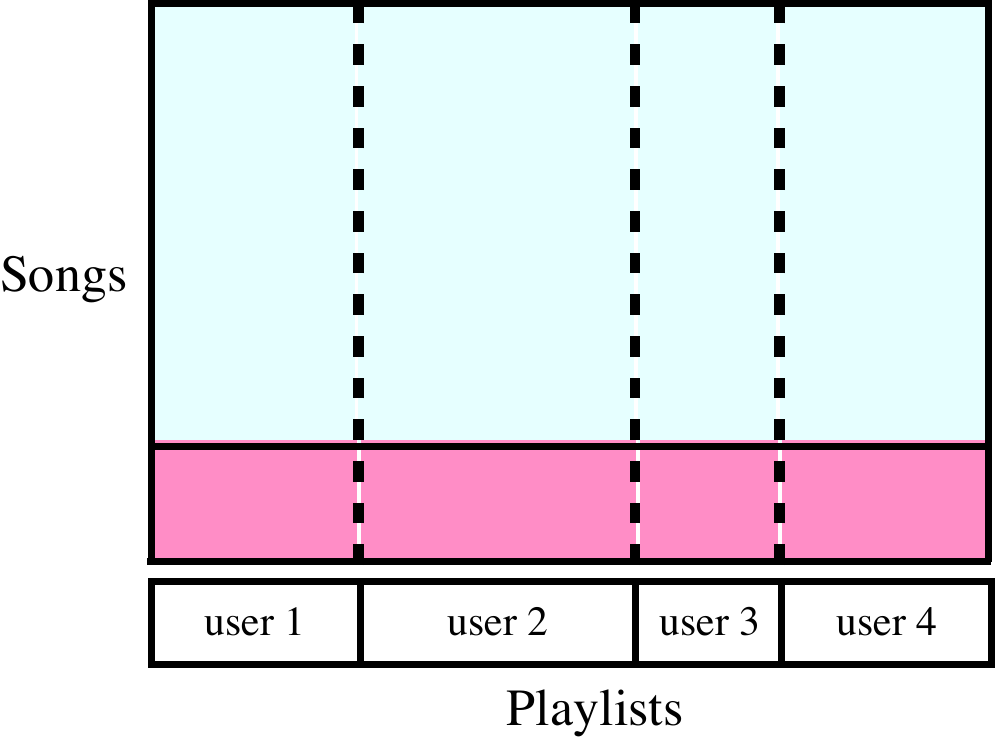}
        \caption{Cold Songs}
    \end{subfigure}
    %\caption{Three cold-start settings: training set (cyan), test set (magenta)}
	\caption{Three settings of cold-start playlist recommendation.
In each setting,
rows represent songs, and a column represents a playlist, 
which is a binary vector where an element denotes if the corresponding song is in the playlist.
%Playlists are first grouped by user, then split into training (Cyan) and test set (Magenta).
%(a) Cold Playlists: a portion of playlists from each user are held for test
%See text below for description.
Playlists are grouped by user.
{\color[rgb]{.4,1,1} \bf Light Cyan} represents playlists or songs in the training set, and
{\color[rgb]{1,0,.5} \bf dark Magenta} represents playlists or songs in the test set.
{\bf (a) Cold Playlists}: recommending personalised playlists (Magenta) for each user 
given users' existing playlists (Cyan); 
%by learning from her existing playlists (Cyan); 
{\bf (b) Cold Users}: recommending playlists for new users (Magenta) 
given playlists from existing users (Cyan);
%by learning playlists from existing users (Cyan); 
%by learning playlists from existing users (Cyan); 
{\bf (c) Cold Songs}: recommending newly released songs (Magenta) to extend users' existing playlists (Cyan).
}
\label{fig:setting}
\end{figure*}

We first define the three cold-start settings considered in this paper,
then introduce the multitask learning objective 
and show how the problem of cold-start playlist recommendation can be handled.
We discuss the challenge in optimising the multitask learning objective
via convex constrained optimisation
and show how one can efficiently approximate an optimal solution by minimising
an unconstrained objective.

\subsection{Cold playlists, cold users and cold songs}

Figure~\ref{fig:setting} illustrates the three cold-start settings for playlist recommendation 
that we study in this paper:
\begin{enumerate}[(a)]
\item Cold playlists, where we recommend songs to form new personalised playlists for each existing user;
\item Cold users, where we recommend songs to form new playlists for each new user;
\item Cold songs, where we recommend newly released songs to extend users' existing playlists.
\end{enumerate}

In the \emph{cold playlists} setting, a target user (\ie the one for whom we recommend playlists)
maintains a number of playlists that can be exploited by the learning algorithm.
In the \emph{cold users} setting, however, we may only know a few simple attributes of a new user
(\eg age, gender, country) or nothing except her user identifier. 
The learning algorithm can only make use of playlists from existing users.
Finally, in the \emph{cold songs} setting, the learning algorithm have access to 
content features (\eg artist, genre, audio data) 
of newly released songs as well as all playlists from existing users.

\subsection{Multitask learning objective}

Suppose we have
a dataset $\DCal$ with $N$ playlists from $U$ users, 
where songs in every playlist are from a music collection with $M$ songs.
Assume each user has at least one playlist, and each song in the collection 
appears in at least one playlist.
Let $P_u$ denote the (indices of) playlists from user $u \in \{1,\dots,U\}$.
We aim to learn a function $f(m, u, i)$ that measures the affinity between 
song $m \in \{1,\dots,M\}$ and playlist $i \in P_u$ from user $u$.
Suppose for song $m$, function $f$ has linear form,
\begin{equation}
\label{eq:scorefunc}
f(m, u, i) = \w_{u,i}^\top \x_m,
\end{equation}
where 
$\x_m \in \R^D$ represents the $D$ features of song $m$,
and $\w_{u,i} \in \R^D$ are the weights %(or representation) 
of playlist $i$ from user $u$.

Inspired by the decomposition of user weights and artist weights in~\cite{ben2017groove},
we decompose $\w_{u,i}$ 
into three components
\begin{equation}
\label{eq:decomp}
\w_{u, i} = \alphabm_u + \betabm_i + \bmu,
\end{equation}
where $\alphabm_u$ are weights for user $u$,
$\betabm_i$ are weights specific for playlist $i$, %(from user $u$),
and $\bmu$ are the weights shared by all users (and playlists).
This decomposition allows us to learn the user weights $\alphabm_u$ using all her
playlists, and exploit all training playlists when learning the shared weights $\bmu$.

Let $\upthetabm$ denote all parameters in $\left\{ \{\alphabm_u\}_{u=1}^U, \{\betabm_i\}_{i=1}^N, \bmu \right\}$.
The learning task is to minimise the empirical risk of affinity function $f$ on dataset $\DCal$ over $\upthetabm$,
\ie %we solve an optimisation problem
\begin{equation}
\label{eq:obj}
\min_\upthetabm \, \Omega(\upthetabm) + R_\upthetabm(f, \DCal),
\end{equation}
where $\Omega(\upthetabm)$ is a regularisation term and $R_\upthetabm(f, \DCal)$ denotes the empirical risk
of $f$ on $\DCal$.
We call the objective in problem~(\ref{eq:obj}) the {\it multitask learning objective},
since we jointly learn from multiple tasks where each one involves recommending a set of songs 
given a user or playlist.

We further assume that playlists from the \emph{same} user have \emph{similar} weights %representations,
and the shared weights $\bmu$ are sparse %that shared by all users are sparse 
(\ie users only share a small portion of their weights).
To impose these assumptions, we apply $\ell_1$ regularisation to encourage sparsity %sparse representations 
of the playlist weights $\betabm_i$ %from the same user
and the shared weights $\bmu$.
The regularisation term in our multitask learning objective is
\begin{equation*}
\Omega(\upthetabm) 
= \lambda_1 \sum_{u=1}^U \|\alphabm_u\|_2^2 
  + \lambda_2 \sum_{i=1}^N \|\betabm_i\|_1 
  + \lambda_3 \| \bmu \|_1,
\end{equation*}
where constants $\lambda_1, \lambda_2, \lambda_3 \in \R_+$,
and the $\ell_2$ regularisation term is to penalise large values in user weights. %representations.
We specify the empirical risk $R_{\uptheta}(f, \DCal)$ later.

\subsection{Cold-start playlist recommendation}

Once parameters $\uptheta$ have been learned, 
we make a recommendation by first scoring each song according to available information (\eg an existing user or playlist),
then form or extend a playlist by either taking the top-$K$ scored songs or sampling songs with probabilities proportional to their scores.
Specifically, %to recommend a new playlist for an existing user (\ie the \emph{cold playlists} setting),
in the \emph{cold playlists} setting where the target user $u$ is known,
we score song $m$ as 
\begin{equation}
\label{eq:cp}
r_m^{(a)} = (\alphabm_u + \bmu)^\top \x_m.
\end{equation}

Further, in the \emph{cold users} setting where %a few 
simple attributes of the new user are available,
we approximate the weights of the new user using the average weights of similar existing users
(\eg in terms of cosine similarity of user attributes)
and score song $m$ as
\vspace{-.8em}
\begin{equation}
\label{eq:cu}
r_m^{(b)} = \left( \frac{1}{|\UCal|} \sum_{u \in \UCal} \alphabm_u + \bmu \right)^\top \x_m,
\end{equation}
where $\UCal$ is the set of (\eg 10) existing users that are most similar to the new user.
On the other hand, if we know nothing about the new user except her identifier,
we can simply score song $m$ using the shared weights, \ie
%\vspace{-.3em}
\begin{equation}
\label{eq:cu2}
r_m^{(b)} = \bmu^\top \x_m.
\end{equation}

Lastly, %to extend an existing playlist $i$ from user $u$ (\ie the \emph{cold songs} setting),
in the \emph{cold songs} setting where we are given a specific playlist $i$ from user $u$,
we therefore can score song $m$ using both user weights and playlist weights, \ie
%\vspace{-.3em}
\begin{equation}
\label{eq:cs}
r_m^{(c)} = (\alphabm_u + \betabm_i + \bmu)^\top \x_m.
\end{equation}

We now specify the empirical risk $R_{\uptheta}(f, D)$ %in problem~(\ref{eq:obj})
and develop methods to optimise the multitask learning objective.

\subsection{Constrained optimisation with ranking loss}

We aim to rank songs that are likely in a playlist above those that are unlikely when making a recommendation.
To achieve this, we optimise the multitask learning objective by minimising a bipartite ranking loss.
In particular, we minimise the number of songs not in a training playlist but ranked above the lowest 
ranked song in it.\footnote{This is known as the Bottom-Push~\cite{rudin2009p} in the bipartite ranking literature.}
The loss of the affinity function $f$ for playlist $i$ from user $u$ is defined as
\begin{equation*}
%\label{eq:loss}
\resizebox{\columnwidth}{!}{$
\Delta_f(u, i) 
= \displaystyle \frac{1}{M_-^i} \sum_{m': y_{m'}^i = 0} \llb \min_{m: y_m^i = 1} f(m, u, i) \le f(m', u, i) \rrb,
$}
\end{equation*}
where $M_-^i$ is the number of songs not in playlist $i$,
binary variable $y_m^i$ denotes whether song $m$ appears in playlist $i$,
and $\llb \cdot \rrb$ is the indicator function that represents the 0/1 loss.

The empirical risk when employing the bipartite ranking loss %~(\ref{eq:loss}) is
$\Delta_f(u, i)$ is
\vspace{-1em}
\begin{equation}
\label{eq:risk_rank}
R_{\upthetabm}^{\textsc{rank}}(f, \DCal) = \frac{1}{N} \sum_{u=1}^U \sum_{i \in P_u} \Delta_f(u, i).
\end{equation}

There are two challenges %to %optimise the above objective,
when optimising the multitask learning objective in problem~(\ref{eq:obj}) 
with the empirical risk $R_{\uptheta}^{\textsc{rank}}$,
namely, the non-differentiable 0/1 loss and the \emph{min} function in %$R_{\upthetabm}^{\textsc{rank}}$.
$\Delta_f(u, i)$.
To address these challenges, we first upper-bound %replace 
the 0/1 loss with one of its convex surrogates, 
\eg the exponential loss $\llb z \le 0 \rrb \le e^{-z}$,
\begin{equation*}
\resizebox{\columnwidth}{!}{$
\displaystyle 
\Delta_f(u, i) \le 
\frac{1}{M_-^i} \sum_{m': y_{m'}^i = 0} \!\! \exp \left(f(m', u, i) - \!\! \min_{m: y_m^i = 1} f(m, u, i) \right).
$}
\end{equation*}

One approach to deal with the \emph{min} function in $\Delta_f(u, i)$ is introducing 
slack variables $\xi_i$ to lower-bound the scores of songs in playlist $i$ 
and 
transform problem (\ref{eq:obj}) with empirical risk $R_{\uptheta}^{\textsc{rank}}$ into a convex constrained optimisation problem 
\begin{equation*}
\resizebox{\columnwidth}{!}{$\displaystyle
\begin{aligned}
\min_{\upthetabm} \ \, & 
\Omega(\upthetabm) 
+ \frac{1}{N} \sum_{u=1}^U \sum_{i \in P_u} \frac{1}{M_-^i} \sum_{m': y_{m'}^i = 0} \!\! \exp \left( f(m', u, i) - \xi_i \right) \\
s.t. \ \, & 
\xi_i \le f(m, u, i), \\
& u \in \{1,\dots,U\}, \ i \in P_u, \ m \in \{1,\dots,M\} \, \text{and} \ y_m^i = 1.
\end{aligned}
$}
\end{equation*}

Note that the number of constraints in the above optimisation problem is
$
\sum_{u=1}^U \sum_{i \in P_u} \sum_{m=1}^M \, \llb y_m^i \!= \!1 \rrb,
$
\ie the accumulated playcount of all songs,
which is of order {\small $O(\widebar{L} N)$} asymptotically, where {\small $\widebar{L}$} 
is the average number of songs in playlists (typically less than $100$). 
However,
the total number of playlists {\small $N$} can be enormous 
in production systems 
(\eg Spotify hosts more than $2$ billion playlists\footnote{https://newsroom.spotify.com/companyinfo}),
which imposes a significant challenge in optimisation. 
This issue could be alleviated by applying the cutting-plane method~\cite{avriel2003nonlinear} or the sub-gradient method.
Unfortunately, we find both methods converge extremely slowly for this problem in practice.
In particular, the cutting plane method is required to solve 
a constrained optimisation problem with at least {\small $N$} constraints in each iteration, 
which remains challenging.

\subsection{Unconstrained optimisation with classification loss}

An alternative approach to deal with the \emph{min} function in $\Delta_f(u, i)$ 
is approximating it using the well known Log-sum-exp function~\cite[p. 72]{boyd2004convex},
\begin{equation*}
  \displaystyle \min_j z_j 
= \displaystyle -\max_j (-z_j) 
= \displaystyle -\lim_{p \to +\infty} \frac{1}{p} \log \sum_j \exp(-p z_j),
\end{equation*}
which allows us to approximate the empirical risk $R_{\upthetabm}^{\textsc{rank}}$ (with the exponential surrogate)
by $\widetilde R_{\upthetabm}^{\textsc{rank}}$ defined as
\begin{equation*}
\resizebox{\columnwidth}{!}{$
\begin{aligned}
\widetilde R_{\upthetabm}^{\textsc{rank}}(f, \DCal)
= \frac{1}{N} \sum_{u=1}^U \sum_{i \in P_u} \frac{1}{M_-^i} \left[
  \sum_{m: y_m^i = 1} \Big[ \delta_f(m, u, i) \Big]^p \right]^\frac{1}{p} \!\! ,
\end{aligned}
$}
\end{equation*}
where hyper-parameter $p \in \R_+$ and
$$
\delta_f(m, u, i) = \!\! \sum_{m': y_{m'}^i = 0} \!\! \exp(-(f(m, u, i) - f(m', u, i))).
$$

We further observe that $\widetilde R_{\upthetabm}^{\textsc{rank}}$ can be transformed into the standard P-Norm Push loss~\cite{rudin2009p} 
by simply swapping the positives {\small $\{m: y_m^i = 1\}$} and negatives {\small $\{m': y_{m'}^i = 0\}$}.
Inspired by the connections between bipartite ranking and binary classification~\cite{menon2016bipartite},
we swap the positives and negatives in the P-Classification loss~\cite{ertekin2011equivalence} while taking care of signs.
This results in an empirical risk with a classification loss:
\begin{equation}
\label{eq:clfrisk}
\resizebox{\columnwidth}{!}{$\displaystyle
\begin{aligned}
R_{\upthetabm}^{\textsc{mtc}}(f, \DCal)
= \frac{1}{N} \sum_{u=1}^U 
& \sum_{i \in P_u} \Bigg(
  \frac{1}{p M_+^i} \sum_{m: y_m^i = 1} \!\! \exp(-p f(m, u, i)) \\
& \hspace{2em}  + \frac{1}{M_-^i} \sum_{m': y_{m'}^i = 0} \!\! \exp(f(m', u, i)) \Bigg),
\end{aligned}
$}
\end{equation}
where $M_+^i$ is the number of songs in playlist $i$.

\begin{lemma}
\label{lm:rank2clf}
Let $\upthetabm^* \in \argmin_{\upthetabm} R_{\upthetabm}^{\textsc{mtc}}$ (assuming minimisers exist),
then $\upthetabm^* \in \argmin_{\upthetabm} \widetilde R_{\upthetabm}^{\textsc{rank}}$.
\end{lemma}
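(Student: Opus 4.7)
My strategy is to show that any $\upthetabm^*$ minimising $R^{\textsc{mtc}}_{\upthetabm}$ is a critical point of $\widetilde R^{\textsc{rank}}_{\upthetabm}$, and then invoke convexity of $\widetilde R^{\textsc{rank}}$ to promote ``critical point'' to ``global minimiser''. This follows the flavour of the Ertekin--Rudin equivalence between P-Classification and P-Norm Push, but extended to the multitask setting where the shared decomposition $\w_{u,i}=\alphabm_u+\betabm_i+\bmu$ couples the per-playlist subproblems.

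I would start by introducing the per-playlist shorthands $A_i=\frac{1}{pM_+^i}\sum_{m:y_m^i=1}e^{-pf(m,u,i)}$ and $B_i=\frac{1}{M_-^i}\sum_{m':y_{m'}^i=0}e^{f(m',u,i)}$, so that $R^{\textsc{mtc}}=\tfrac{1}{N}\sum_i(A_i+B_i)$ and, after the algebra already carried out in the excerpt, $\widetilde R^{\textsc{rank}}=\tfrac{1}{N}\sum_i(pM_+^i)^{1/p}A_i^{1/p}B_i$. Then I would verify convexity of $\widetilde R^{\textsc{rank}}$: since $\log A_i$ and $\log B_i$ are both log-sum-exp of affine functions of $\upthetabm$ (using the linear form of $f$), both are convex in $\upthetabm$, so $\log(A_i^{1/p}B_i)=\tfrac{1}{p}\log A_i+\log B_i$ is convex, each summand is log-convex hence convex, and the sum is convex.

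The core of the argument is the first-order analysis. Because $\betabm_i$ appears only in the $i$-th summand of $R^{\textsc{mtc}}$, stationarity at $\upthetabm^*$ yields the vector equation $\nabla_{\betabm_i}(A_i+B_i)=0$ for every playlist $i$. Projecting onto a bias coordinate of $\x_m$ (or any other direction along which $\betabm_i$ acts as a pure per-playlist shift) collapses this to the scalar identity $B_i=pA_i$. Substituting into
\[
\nabla_\theta\widetilde R^{\textsc{rank}}=\tfrac{1}{N}\sum_i(pM_+^i)^{1/p}A_i^{1/p}\Bigl[\tfrac{B_i}{pA_i}\nabla_\theta A_i+\nabla_\theta B_i\Bigr]
\]
makes the bracket reduce to $\nabla_\theta A_i+\nabla_\theta B_i=\nabla_\theta(A_i+B_i)$. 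The linear decomposition gives $\nabla_{\alphabm_u}f=\nabla_{\betabm_i}f=\nabla_\bmu f=\x_m$, so for each $\theta\in\{\alphabm_u,\betabm_i,\bmu\}$ the term $\nabla_\theta(A_i+B_i)$ either vanishes (when $\theta$ does not enter playlist $i$) or coincides with $\nabla_{\betabm_i}(A_i+B_i)=0$. Hence $\nabla\widetilde R^{\textsc{rank}}(\upthetabm^*)=0$ along every parameter block, and convexity from the previous paragraph finishes the argument.

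The main obstacle I expect is the scalar reduction $B_i=pA_i$: the $\betabm_i$-stationarity gives only a vector equation in $\R^D$, and collapsing it into this scalar balance requires a distinguished direction in feature space (a bias entry in $\x_m$, or equivalently an unpenalised per-playlist offset). Without such a coordinate the factor $1-B_i/(pA_i)$ in the $\betabm_i$-gradient of $\widetilde R^{\textsc{rank}}$ need not vanish, and the equivalence fails. Justifying this hypothesis from the paper's linear model $f(m,u,i)=\w_{u,i}^\top\x_m$, and explaining why the minimisers of $R^{\textsc{mtc}}$ are assumed finite in the first place, is where I would spend most of the care.
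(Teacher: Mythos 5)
Your proposal is correct and follows essentially the same route as the paper's appendix proof: both differentiate with respect to the bias coordinate of $\betabm_i$ to obtain the scalar balance between the positive and negative exponential sums (your $B_i = pA_i$ is the paper's Eq.~\ref{eq:eq1}), show the per-playlist $\betabm_i$-gradient of $\widetilde R_{\upthetabm}^{\textsc{rank}}$ vanishes at $\upthetabm^*$, and propagate this to $\alphabm_u$ and $\bmu$ via the additive decomposition $\w_{u,i}=\alphabm_u+\betabm_i+\bmu$. Your write-up is in fact slightly tighter on two points the paper leaves implicit: you state and justify the convexity of $\widetilde R_{\upthetabm}^{\textsc{rank}}$ needed to promote stationarity to global optimality, and you explicitly flag that the argument hinges on the constant feature $x_m^0=1$, which the paper does assume both in the proof and in its feature construction.
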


\begin{proof}
See Appendix for a complete proof.
Alternatively, 
we can use the proof of the equivalence between P-Norm Push loss and P-Classification loss~\cite{ertekin2011equivalence}
if we swap the positives and negatives.
\end{proof}

By Lemma~\ref{lm:rank2clf}, 
we can optimise the parameters of the multitask learning objective by 
solving a (convex) unconstrained optimisation problem:\footnote{We choose not to directly optimise 
the empirical risk $\widetilde R_{\uptheta}^{\textsc{rank}}$, which involves the P-Norm Push, 
since classification loss can be optimised more efficiently in general~\cite{ertekin2011equivalence}.}
\begin{equation}
\label{eq:expobj_clf}
\min_\upthetabm \ \Omega(\upthetabm) + R_{\upthetabm}^{\textsc{mtc}}(f, \DCal).
\end{equation}

Problem~(\ref{eq:expobj_clf}) can be efficiently optimised using the
Orthant-Wise Limited-memory Quasi-Newton (OWL-QN) algorithm~\cite{andrew2007scalable},
an L-BFGS variant that can address $\ell_1$ regularisation effectively.

We refer to the approach that solves problem~(\ref{eq:expobj_clf}) as \emph{Multitask Classification} (MTC). 
As a remark, optimal solutions of problem (\ref{eq:expobj_clf}) are not necessarily the optimal solutions 
of problem $\min_\upthetabm \ \Omega(\upthetabm) + \widetilde R_{\upthetabm}^{\textsc{rank}}$ due to regularisation. 
However, when parameters $\upthetabm$ are small (which is generally the case when using regularisation),
optimal solutions of the two objectives can nonetheless approximate each other well.

\section{Related work}

We summarise recent work most related to playlist recommendation and music recommendation in cold-start 
scenarios,
as well as work on the connection between bipartite ranking and binary classification.

There is a rich collection of recent literature on playlist recommendation,
which can be summarised into two typical settings: 
playlist generation and next song recommendation. % and playlist continuation.
Playlist generation is to produce a complete playlist given some 
seed. %~\cite{platt2002learning,mcfee2011natural,mcfee2012hypergraph,chen2012playlist,ben2017groove},
For example, the AutoDJ system~\cite{platt2002learning} generates playlists given one or more seed songs;
Groove Radio can produce a personalised playlist for the specified user given a seed artist~\cite{ben2017groove};
or a seed location in hidden space (where all songs are embedded)
can be specified in order to generate a complete playlist~\cite{chen2012playlist}.
There are also works that focus on evaluating the learned playlist model,
without concretely generating playlists~\cite{mcfee2011natural,mcfee2012hypergraph}.
See this recent survey~\cite{bonnin2015automated} for more details.

Next song recommendation %~\cite{hariri2012context,bonnin2013evaluating,jannach2015beyond}
predicts
the next song a user might play after observing some context.
For example, the most recent sequence of songs with which a user has interacted was used to
infer the contextual information,
which was then employed %has been employed %was further adopted 
to rank the next possible song
via %with regards to 
a topic-based sequential model %patterns 
learned from users' playlists~\cite{hariri2012context}.
Context can also be the artists %appeared 
in a user's listening history,
which has been %were 
employed to score the next song together with frequency of artist collocations
as well as song popularity~\cite{mcfee2012million,bonnin2013evaluating}.
It is straightforward to produce a complete playlist using next song recommendation techniques,
\ie by picking the next song sequentially~\cite{bonnin2013evaluating,ben2017groove}.

In the collaborative filtering literature,
the cold-start setting has primarily been addressed through
suitable regularisation of matrix factorisation parameters
based on exogenous user- or item-features~\cite{Ma:2008,Agarwal:2009,Cao:2010}.
Content-based approaches~\cite[chap. 4]{aggarwal2016recommender}
can handle the recommendation of new songs,
typically by making use of content features of songs extracted either automatically~\cite{seyerlehner2010automatic,eghbal2015vectors}
or manually by musical experts~\cite{john2006pandora}.
Further, content features can also be combined with other approaches, such as those based on 
collaborative filtering~\cite{yoshii2006hybrid,donaldson2007hybrid,shao2009music},
which is known as the hybrid recommendation approach~\cite{burke2002hybrid,aggarwal2016recommender}.

Another popular approach for cold-start recommendation involves explicitly mapping 
user- or item- content features to latent embeddings~\cite{Gantner:2010}.
This approach can be adopted to recommend new songs, 
\eg by learning a convolutional neural network to map audio features of new songs to 
the corresponding latent embeddings~\cite{van2013deep},
which were then %are %further 
used to score songs together with the latent embeddings of playlists (learned by MF). % matrix factorisation).
The problem of recommending music for new users can also be tackled using a similar approach, \eg
by learning a mapping from user attributes to user embeddings.

A slightly different approach to deal with music recommendation for new users is learning hierarchical 
representations for genre, sub-genre and artist.
By adopting an additive form with user and artist weights, it can fall back to using only artist weights
when recommending music to new users; if the artist weights are not available (\eg a new artist), this approach 
further falls back to using the weights of sub-genre or genre~\cite{ben2017groove}.
However, the requirement of seed information (\eg artist, genre or a seed song) restricts its direct applicability to
the \emph{cold playlists} and \emph{cold users} settings. %where no seed information is available.
Further, encoding song usage information as features makes it unsuitable 
for recommending new songs directly.

It is well known
that bipartite ranking and binary classification are
closely related~\cite{ertekin2011equivalence,menon2016bipartite}.
In particular, \citet{ertekin2011equivalence} have shown that the P-Norm Push~\cite{rudin2009p}
is equivalent to the P-Classification when 
the exponential surrogate of 0/1 loss is employed.
Further, the P-Norm Push is an approximation of the Infinite-Push~\cite{agarwal2011infinite},
or equivalently, the Top-Push~\cite{li2014top}, which focuses on the highest ranked negative example instead of
the lowest ranked positive example in the Bottom-Push adopted in this work.
Compared to the Bayesian Personalised Ranking (BPR) approach~\cite{rendle2009bpr,mcfee2012million} that requires all
positive items to be ranked higher than those unobserved ones, 
the adopted approach only penalises unobserved items that ranked higher than the lowest ranked positive item,
which can be optimised more efficiently 
when only the top ranked items are of interest~\cite{rudin2009p,li2014top}.

\clearpage
\newpage

\section{Experiments}
\label{sec:experiment}

We present empirical evaluations for cold-start playlist recommendation on two real playlist datasets,
and compare the proposed multitask learning method with a number of well known baseline approaches.

\subsection{Dataset}
We evaluate on 
two publicly available playlist datasets: the 30Music~\cite{30music2015} and 
the AotM-2011~\cite{mcfee2012hypergraph} dataset.
The Million Song Dataset (MSD)~\cite{msd2011} serves as an underlying dataset where songs in all playlists 
are intersected; additionally, song and artist information in the MSD are used to compute song features.

\noindent
{\bf 30Music Dataset} is a collection of listening events and user-generated
playlists retrieved from Last.fm.\footnote{https://www.last.fm}
We first intersect the playlists data with songs in the MSD, 
then filter out playlists with less than 5 songs.
This results in about 17K playlists over 45K songs from 8K users.

\noindent
{\bf AotM-2011 Dataset} is a collection of playlists shared by Art of the Mix\footnote{http://www.artofthemix.org} 
users during the period from 1998 to 2011. Songs in playlists have been matched to those in the MSD.
It contains 
roughly 84K playlists over 114K songs from 14K users
after filtering out playlists with less than 5 songs.

Table~\ref{tab:stats_pldata} summarises the two 
playlist datasets used in this work.
See Appendix for more details.

\subsection{Features}

Song metadata, audio data, genre and artist information, as well as song popularity
(\ie the accumulated playcount of the song in the training set)
and artist popularity 
(\ie the accumulated playcount of all songs from the artist in the training set)
are encoded as features.
The metadata of songs (\eg duration, year of release) and pre-computed audio features (\eg loudness, mode, tempo) are from the MSD.
We use genre data from the Top-MAGD genre dataset~\cite{schindler2012facilitating}
and tagtraum genre annotations for the MSD~\cite{schreiber2015improving} via one-hot encoding.
If the genre of a song is unknown, 
we apply mean imputation using genre counts of songs in the training set.
To encode artist information as features,
we create a sequence of artist identifiers for each playlist in the training set, and train
a word2vec\footnote{https://github.com/dav/word2vec} model that learns embeddings of artists.
We assume no popularity information is available for newly released songs,
and therefore song popularity is not a feature in the \emph{cold songs} setting.
Finally, we add a constant feature (with value $1.0$) for each song to account for bias.

\begin{table}[!t]
\centering
\caption{Statistics of music playlist datasets}
\label{tab:stats_pldata}
\resizebox{.8\linewidth}{!}{
\begin{tabular}{lrr}
\toprule
               & 30Music & AotM-2011 \\
\midrule
Playlists      & 17,457  & 84,710    \\
Users          & 8,070   & 14,182    \\
Avg. Playlists per User & 2.2     & 6.0       \\
\midrule
Songs          & 45,468  & 114,428   \\
Avg. Songs per Playlist & 16.3    & 10.1      \\
\midrule
Artists        & 9,981   & 15,698    \\
Avg. Songs per Artist   & 28.6    & 53.8      \\
\bottomrule
\end{tabular}
}
\end{table}

\subsection{Experimental setup}

We first split the two playlist datasets into training and test sets, %for each of the three cold-start settings,
then evaluate the test set performance of the proposed method,
and compare it against
several baseline approaches
in each of the three cold-start settings.

\subsubsection{Dataset split}
In the \emph{cold playlists} setting,
we hold a portion of the playlists from about 20\% of users in both datasets for testing, 
and all other playlists are used for training.
The test set is formed by sampling playlists where each song has been included in 
at least five playlists among the whole dataset.
We also make sure each song in the test set appears in the training set,
and all users in the test set have a few %a number of 
playlists in the training set.
In the \emph{cold users} setting,
we sample 30\% of users and hold all of their playlists in both datasets.
Similarly, we require songs in the test set to appear in the training set,
and a user will thus not be used for testing %included in the test set 
if holding all of her playlists breaks this requirement.
To evaluate 
in the \emph{cold songs} setting,
we hold 5K of the latest released songs in the 30Music dataset,
and 10K of the latest released songs in the AotM-2011 dataset where more songs are available.
We remove playlists where all songs have been held for testing. %are removed from both the training and test set.

See Appendix for the statistics of these dataset splits.

\begin{table*}[!t]
\caption{AUC for playlist recommendation in three cold-start settings. \emph{Higher} values indicate better performance.}
\label{tab:auc}
\resizebox{\textwidth}{!}{
\begin{tabular}{lcccclcccclcccc}
\toprule 
\multicolumn{3}{c}{Cold Playlists} &&& \multicolumn{3}{c}{Cold Users} &&& \multicolumn{3}{c}{Cold Songs} \\ 
\cmidrule{1-3} \cmidrule{6-8} \cmidrule{11-13}
Method &  30Music &  AotM-2011 &&& 
Method &  30Music &  AotM-2011 &&& 
Method &  30Music &  AotM-2011 \\
\cmidrule{1-3} \cmidrule{6-8} \cmidrule{11-13}
%\midrule
PopRank &  94.0    &  93.8      &&& 
PopRank &  88.3    & {\bf 91.8} &&& 
PopRank &  70.9    &  76.5      \\
CAGH    &  94.8    &  94.2      &&& 
CAGH    &  86.3    &  88.1      &&& 
CAGH    &  68.0    &  77.4      \\
SAGH    &  64.5    &  79.8      &&&
SAGH    &  54.5    &  53.7      &&&
SAGH    &  51.5    &  53.6      \\
WMF     &  79.5    &  85.4      &&&
WMF+kNN &  84.9    &  N/A       &&&
MF+MLP  &  81.4    &  80.8      \\
MTC     & {\bf 95.9} & {\bf 95.4}  &&&
MTC     & {\bf 88.8} & {\bf 91.8}  &&&
MTC     & {\bf 86.6} & {\bf 84.3}  \\
\bottomrule
\end{tabular}
}
\end{table*}

\subsubsection{Baselines}
We compare the performance of our proposed method (\ie MTC) % (referred as {\it Multitask Classification}) 
with the following %a number of 
baseline approaches in each of the three cold-start settings:
\begin{itemize}
\item The {\it Popularity Ranking} (PopRank) method scores a song using only its popularity in the training set.
      In the \emph{cold songs} setting where song popularity is not available, 
      a song is scored by the popularity of the corresponding artist.
\item The {\it Same Artists - Greatest Hits} (SAGH)~\cite{mcfee2012million} method scores a song
      by its popularity if the artist of the song appears in the given user's playlists (in the training set);
      otherwise the song is scored zero.
      In the {\it cold songs} setting, this method only considers songs from artists that appear in the given playlist,
      and scores a song using the popularity of the corresponding artist.
\item The {\it Collocated Artists - Greatest Hits} (CAGH)~\cite{bonnin2013evaluating} method is a variant of SAGH.
      It scores a song using its popularity, but weighted by the frequency of the collocation between 
      the artist of the song and artists that appear in the given user's playlists (in the training set).
      In the \emph{cold users} setting, we use the 10 most popular artists instead of artists 
      in the user's listening history, and the \emph{cold songs} setting is addressed in the same way as in SAGH.
\item A variant of Matrix Factorisation (MF), which first learns the latent factors of songs, playlists
      or users through MF, then scores each song by the dot product of the corresponding latent factors.
      Recommendations are made as per the proposed method.
      In the \emph{cold playlists} setting, we factorise the song-user playcount matrix using the 
      weighted matrix factorisation (WMF) algorithm~\cite{hu2008collaborative}, which learns the 
      latent factors of songs and users.
      In the \emph{cold users} setting, we first learn the latent factors of songs and users using WMF,
      then approximate the latent factors of a new user by the average latent factors of the $k$ (\eg 100)
      nearest neighbours %\footnote{We choose $k=100$}
      (in terms of cosine similarity of user attributes, \eg age, gender and country) in the training set.
      We call this method WMF+kNN.
      \footnote{This method does not apply to the AotM-2011 dataset in the cold users setting,
      since such user attributes (\eg age, gender and country) are not available in the dataset.}
      In the \emph{cold songs} setting, we factorise the song-playlist matrix to learn the latent factors of 
      songs and playlists, which are then used to train a neural network to map song content features 
      to the corresponding latent factors~\cite{Gantner:2010,van2013deep}.
	  We can then obtain the latent factors of a new song as long as its content features are available.
      We call this method MF+MLP. 
\end{itemize}

\subsubsection{Evaluation}

We evaluate %the performance of 
all approaches using two accuracy metrics that have been adopted 
in playlist recommendation tasks:
\emph{HitRate@K}~\cite{hariri2012context} and \emph{Area under the ROC curve} (AUC)~\cite{manning2008introIR}.
We further adopt two beyond-accuracy metrics: %\ie
\emph{Novelty}~\cite{zhang2012auralist,schedl2017} and \emph{Spread}~\cite{kluver2014evaluating},
which are specifically tailored to recommender systems.

\emph{HitRate@K} (\ie Recall@K) 
is the number of correctly recommended songs amongst the top-$K$ recommendations over
the number of songs in the %ground truth playlist. %\footnote{This metric is also known as Recall@K~\cite{schedl2017}.}.
observed playlist.
It has been widely employed to evaluate %several 
playlist generation and next song recommendation
methods~\cite{hariri2012context,bonnin2013evaluating,bonnin2015automated,jannach2015beyond}.
AUC has been primarily used to measure the performance of classifiers.
It has been applied to evaluate %performance of 
playlist generation methods when the task
has been cast as a sequence of classification problems~\cite{ben2017groove}.

It is believed that
useful recommendations need to include previously unknown items~\cite{herlocker2004evaluating,zhang2012auralist}.
This ability can be measured by \emph{Novelty},
which is based on the assumption that, intuitively, the more popular a song is, 
the more likely a user is to be familiar with it, and therefore the less likely to be novel.
\emph{Spread}, however, is used to measure the ability of an algorithm to spread its attention across all possible songs.
It is defined as the entropy of the distribution of all songs.
See Appendix for more details of these beyond-accuracy metrics.

\subsection{Results and discussion}

\begin{figure}[!t]
    \centering
    \includegraphics[width=\columnwidth]{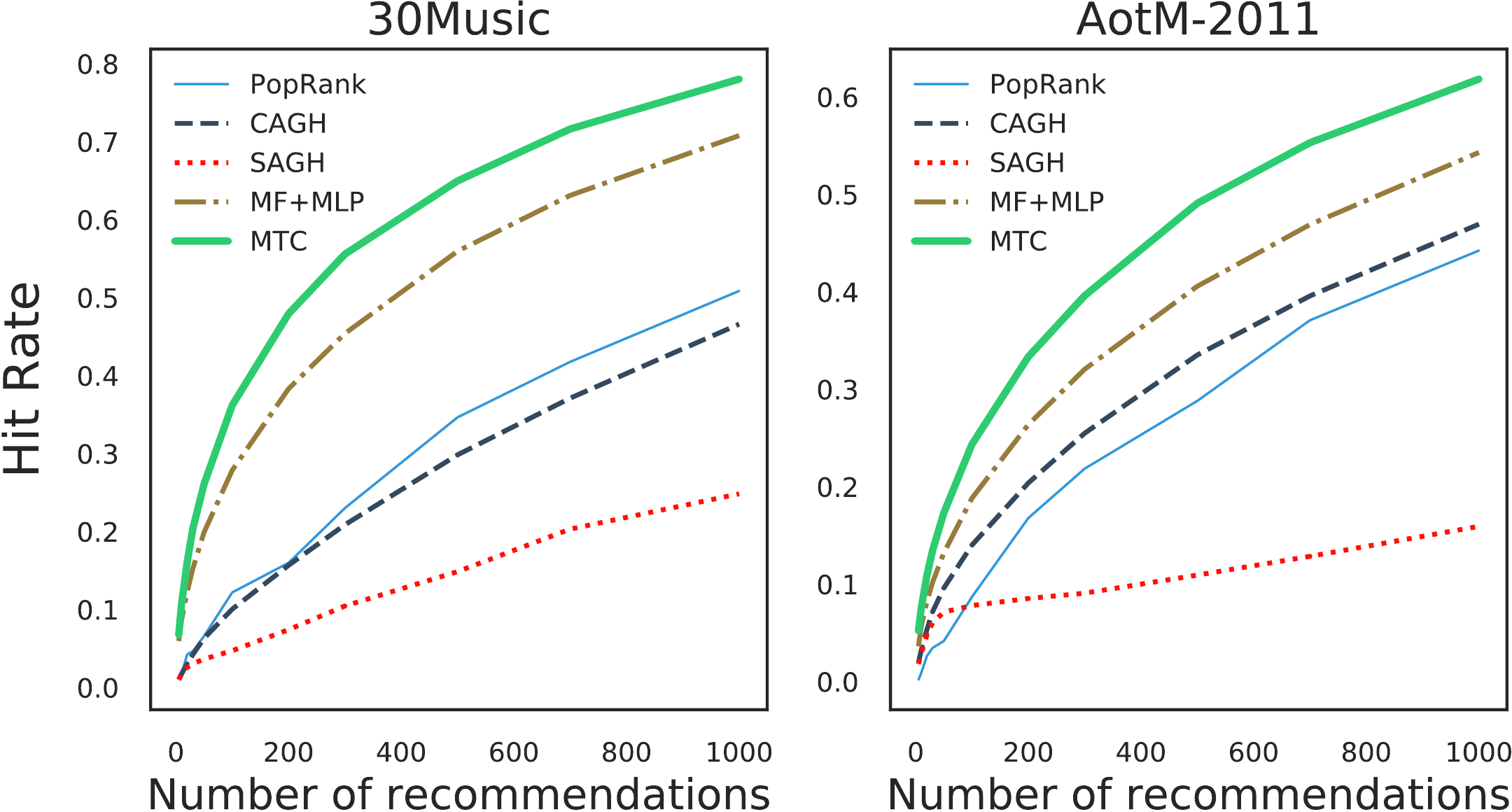}
    \caption{Hit rate of recommendation in the \emph{cold songs} setting.
\emph{Higher} values indicate better performance.}
    \label{fig:hr1}
\end{figure}

\begin{table*}[!t]
\caption{Spread for playlist recommendation in three cold-start settings. \emph{Moderate} values are preferable.}
\label{tab:spread}
\resizebox{\textwidth}{!}{
\begin{tabular}{lcccclcccclcccc}
\toprule 
\multicolumn{3}{c}{Cold Playlists} &&& \multicolumn{3}{c}{Cold Users} &&& \multicolumn{3}{c}{Cold Songs} \\ 
\cmidrule{1-3} \cmidrule{6-8} \cmidrule{11-13}
Method & 30Music & AotM-2011 &&& Method & 30Music & AotM-2011 &&& Method & 30Music & AotM-2011 \\ 
\cmidrule{1-3} \cmidrule{6-8} \cmidrule{11-13}
%\midrule
PopRank & \: 9.8 &   10.5  &&&  
PopRank & \: 9.8 &   10.5  &&&  
PopRank &    7.4 &    7.8  \\
CAGH    & \: 5.8 & \: 2.3  &&&  
CAGH    & \: 4.2 & \: 5.3  &&&  
CAGH    &    4.3 &    4.6  \\
SAGH    &   10.3 &   10.4  &&&  
SAGH    &   10.0 &   10.7  &&&  
SAGH    &    6.5 &    5.9  \\
WMF     &   10.7 &   11.6  &&&  
WMF+kNN &   10.7 & N/A     &&&
MF+MLP  &    8.5 &    9.2  \\
MTC     & \: 9.4 &   10.4  &&&  
MTC     & \: 9.9 &   11.4  &&&  
MTC     &    7.9 &    8.3  \\
\bottomrule
\end{tabular}
}
\end{table*}

\subsubsection{Accuracy}

Table~\ref{tab:auc} shows the performance of all methods in terms of AUC.
We can see that PopRank %which simply ranking songs based on (song or artist) popularity 
achieves good performance in all three cold-start settings.
This is in line with results reported in~\cite{bonnin2013evaluating,bonnin2015automated}.
Artist information, particularly the frequency of artist collocations that is exploited in CAGH, 
improves recommendation in the cold playlists and cold songs settings.
Further, PopRank is one of the best performing methods in the cold users setting,
which is consistent with previous discoveries~\cite{mcfee2012million,bonnin2013evaluating,bonnin2015automated}.
The reason is believed to be the long-tailed distribution of songs in
playlists~\cite{cremonesi2010performance,bonnin2013evaluating}.
The MF variant does not perform well in the cold playlists setting,
but it performs reasonably well in the cold users setting when attributes of new users are available
(\eg in the 30Music dataset),
and it works particularly well in the cold songs setting where both song metadata and audio features are available 
for new songs.

Lastly, 
MTC is the (tied) best performing method in all three cold-start settings on both datasets.
Interestingly, it achieves the same performance as PopRank in the cold users setting on the AotM-2011 dataset,
which suggests that MTC might degenerate to simply ranking songs according to popularity when making recommendations
for new users; however, when attributes of new users are available, %(\eg in the 30Music dataset), 
it can improve 
by exploiting information learned from existing users.

Figure~\ref{fig:hr1} shows the hit rate of all methods in the cold songs setting
when the number of recommended new songs varies from 5 to 1000.
As expected, the performance of all methods improves when the number of recommendations increases.
Further, we observe that learning based approaches (\ie MTC and MF+MLP) always perform better than 
other baselines that use only artist information.
works surprisingly well;
it even outperforms CAGH which exploits artist collocations on the 30Music dataset.
The fact that CAGH always performs better than SAGH confirms that artist collocation is helpful
for music recommendation.
Lastly, MTC outperforms all other methods by a big margin on both datasets,
which demonstrates the effectiveness of the proposed approach for recommending new songs.

We also observe that MTC improves over baselines in the cold playlists
and cold users settings (when simple attributes of new users are available),
although the margin is not as big as that in the cold songs setting.
See Appendix for details.

\subsubsection{Beyond accuracy}

Note that, unlike AUC and hit rate,
where higher values indicate better performance,
\emph{moderate} values of Spread and Novelty are usually preferable~\cite{kluver2014evaluating,schedl2017}.

Table~\ref{tab:spread} shows the performance of all recommendation approaches in terms of \emph{Spread}.
In the cold songs setting, CAGH and SAGH focus on songs from artists in users' listening history and similar artists, 
which explains the relative low \emph{Spread}.
However, in the cold playlists and cold users settings, 
SAGH improves its attention spreading due to the set of songs it focuses on is significantly bigger 
(\ie songs from all artists in users' previous playlists and songs from the 10 most popular artists, respectively).
Surprisingly, CAGH remains focusing on a relatively small set of songs in both settings.
Lastly, in all three cold-start settings, the MF variants have the highest \emph{Spread},
while both PopRank and MTC have (similar) moderate \emph{Spread},
which is considered better. %to have better performance.

Figure~\ref{fig:nov3} shows the \emph{Novelty} of all methods in the cold playlists setting.
We can see that PopRank has the lowest \emph{Novelty},
which is not surprising given the definition of \emph{Novelty} (see Appendix).
Both SAGH and CAGH start with low \emph{Novelty} and grow when the number of recommended songs increases,
but the \emph{Novelty} of CAGH saturates much earlier than that of SAGH.
The reason could be that,
when the number of recommendations is larger than the total number of songs from artists in a user's previous playlists,
SAGH will simply recommend songs randomly (which are likely to be novel)
while CAGH will recommend songs from artists that are similar to those in the user's previous playlists
(which could be comparably less novel).
Further, MTC achieves lower \emph{Novelty} than WMF and CAGH, 
which indicates that MTC tends to recommend popular songs to form new playlists. %for existing users.
To conclude, 
MTC and CAGH have moderate \emph{Novelty} on both datasets,
and therefore perform better than other approaches.

The proposed approach also achieves moderate \emph{Novelty} in the cold songs setting.
However, in the cold users setting, the MF variant and CAGH have moderate \emph{Novelty},
which are therefore preferred. See Appendix for details.

\begin{figure}[!t]
    \centering
    \includegraphics[width=\columnwidth]{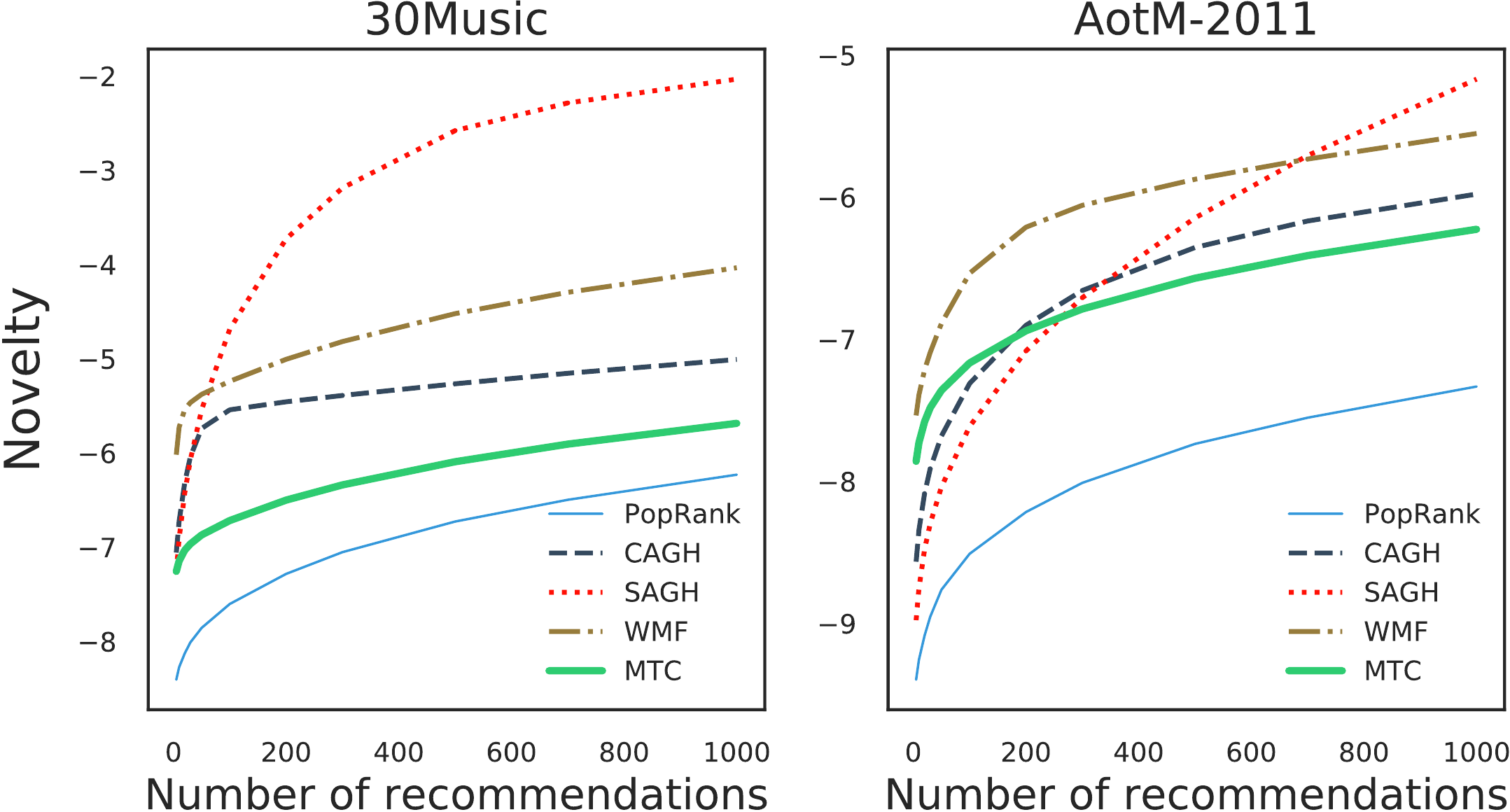}
    \caption{Novelty of recommendation in the \emph{cold playlists} setting.
\emph{Moderate} values are preferable.}
    \label{fig:nov3}
\end{figure}

\section{Conclusion and future work}

We study the problem of recommending playlists to users in three cold-start settings:
cold playlists, cold users and cold songs.
We propose a multitask learning method that learns user- and playlist-specific weights 
as well as shared weights from user-curated playlists,
which allows us to form new personalised playlists for an existing user, %in the cold playlists setting,
produce playlists for a new user, %in the cold users setting, 
and extend users' playlists with newly released songs. %in the cold songs setting.
We optimise the parameters (\ie weights) %of the multitask learning method 
by minimising a bipartite ranking loss
that encourages songs in a playlist to be ranked higher than those that are not.
An equivalence between bipartite ranking and binary classification further enables efficient 
approximation of optimal parameters.
Empirical evaluations on two real playlist datasets demonstrate the effectiveness of the proposed method 
for cold-start playlist recommendation.
For future work, we would like to explore 
auxiliary data sources (\eg music information shared on social media) and additional features of songs and users 
(\eg lyrics, user profiles) % as well as the sequential order of songs in playlist %which could 
to make better recommendations.
Further, non-linear models such as deep neural networks have been shown to work extremely well in a wide range of tasks,
and the proposed linear model with sparse parameters %in this work 
could be more compact if a non-linear model %objective 
were adopted.

\clearpage
\newpage

\clearpage
\onecolumn
\appendix
\begin{center}
  {\Large\bf Appendix to ``Cold-start Playlist Recommendation with Multitask Learning''}
\end{center}
\rule{0pt}{50pt}

\section{Proof of Lemma~\ref{lm:rank2clf}}

First, we can approximate the empirical risk $R_{\uptheta}^{\textsc{rank}}$ (with the exponential surrogate) as follows:
\begin{equation*}
\begin{aligned}
R_{\uptheta}^{\textsc{rank}}(f, \DCal)
&= \frac{1}{N} \sum_{u=1}^U \sum_{i \in P_u} \frac{1}{M_-^i} \sum_{m': y_{m'}^i = 0} \exp \left( -\min_{m: y_m^i = 1} f(m, u, i) + f(m', u, i) \right) \\
&= \frac{1}{N} \sum_{u=1}^U \sum_{i \in P_u} \frac{1}{M_-^i} \exp \left( -\min_{m: y_m^i = 1} f(m, u, i) \right) 
   \sum_{m': y_{m'}^i = 0} \exp \left( f(m', u, i) \right) \\
&\approx \frac{1}{N} \sum_{u=1}^U \sum_{i \in P_u} \frac{1}{M_-^i} \exp \left( \frac{1}{p} \log \sum_{m: y_m^i = 1} e^{-p f(m, u, i)} \right)
   \sum_{m': y_{m'}^i = 0} \exp \left( f(m', u, i) \right) \\
&= \frac{1}{N} \sum_{u=1}^U \sum_{i \in P_u} \frac{1}{M_-^i} \left( \sum_{m: y_m^i = 1} e^{-p f(m, u, i)} \right)^\frac{1}{p} 
   \sum_{m': y_{m'}^i = 0} e^{f(m', u, i)} \\
&= \frac{1}{N} \sum_{u=1}^U \sum_{i \in P_u} \frac{1}{M_-^i} \left( \left( \sum_{m': y_{m'}^i = 0} e^{f(m', u, i)} \right)^p 
   \sum_{m: y_m^i = 1} e^{-p f(m, u, i)} \right)^\frac{1}{p} \\
&= \frac{1}{N} \sum_{u=1}^U \sum_{i \in P_u} \frac{1}{M_-^i} \left( 
   \sum_{m: y_m^i = 1} e^{-p f(m, u, i)} \left( \sum_{m': y_{m'}^i = 0} e^{f(m', u, i)} \right)^p \right)^\frac{1}{p} \\
&= \frac{1}{N} \sum_{u=1}^U \sum_{i \in P_u} \frac{1}{M_-^i} \left( 
   \sum_{m: y_m^i = 1} \left( \sum_{m': y_{m'}^i = 0} e^{- \left( f(m, u, i) - f(m', u, i) \right)} \right)^p \right)^\frac{1}{p} \\
&= \widetilde R_{\uptheta}^{\textsc{rank}}(f, \DCal).
\end{aligned}
\end{equation*}

Recall that $R_{\uptheta}^\textsc{mtc}$ is the following classification risk:
\begin{equation*}
R_{\uptheta}^{\textsc{mtc}}(f, \DCal)
= \frac{1}{N} \sum_{u=1}^U \sum_{i \in P_u} \left( 
  \frac{1}{p M_+^i} \sum_{m: y_m^i = 1} e^{-p f(m, u, i)} 
  + \frac{1}{M_-^i} \sum_{m': y_{m'}^i = 0} e^{f(m', u, i)} \right),
\end{equation*}

Let $\uptheta^* \in \argmin_{\uptheta} R_{\uptheta}^{\textsc{mtc}}$ (assuming minimisers exist),
we want to prove that $\uptheta^* \in \argmin_{\uptheta} \widetilde R_{\uptheta}^{\textsc{rank}}$.

\begin{proof}
We follow the proof technique in~\cite{ertekin2011equivalence}
by first introducing a constant feature $1$ for each song,
without loss of generality, let the first feature of $\x_m, \, m \in \{1,\dots,M\}$ be the constant feature, \ie $x_m^0 = 1$.
We can show that
$\frac{\partial \, R_{\uptheta}^{\textsc{mtc}}} {\partial \, \uptheta} = 0$ implies
$\frac{\partial \, \widetilde R_{\uptheta}^{\textsc{rank}}} {\partial \, \uptheta} = 0$,
which means minimisers of $R_{\uptheta}^{\textsc{mtc}}$ also minimise $\widetilde R_{\uptheta}^{\textsc{rank}}$.

Let 
%\begin{equation*}
%\begin{aligned}
$
\displaystyle
0 
= \frac{\partial \, R_{\uptheta}^{\textsc{mtc}}} {\partial \, \beta_i^0}
= \frac{1}{N} \left( 
   \frac{1}{p M_+^i} \sum_{m: y_m^i = 1} e^{-p f(m, u, i)} (-p)
   + \frac{1}{M_-^i} \sum_{m': y_{m'}^i = 0} e^{f(m', u, i)} \right),
\ \forall i \in P_u, \, u \in \{1,\dots,U\},
$
%\end{aligned}
%\end{equation*}

we have
\begin{equation}
\label{eq:eq1}
\frac{1}{M_+^i} \sum_{m: y_m^i = 1} e^{-p f(m, u, i)} \Bigg|_{\uptheta = \uptheta^*}
= \frac{1}{M_-^i} \sum_{m': y_{m'}^i = 0} e^{f(m', u, i)} \Bigg|_{\uptheta = \uptheta^*}, 
\ \forall i \in P_u, \, u \in \{1,\dots,U\},
\end{equation}

Further, let
\begin{equation*}
\zero 
= \frac{\partial \, R_{\uptheta}^{\textsc{mtc}}} {\partial \, \betabm_i} 
= \frac{1}{N} \left( 
   \frac{1}{p M_+^i} \sum_{m: y_m^i = 1} e^{-p f(m, u, i)} (-p \x_m)
   + \frac{1}{M_-^i} \sum_{m': y_{m'}^i = 0} e^{f(m', u, i)} \x_{m'} \right),
\ \forall i \in P_u, \, u \in \{1,\dots,U\},
\end{equation*}
we have
\begin{equation}
\label{eq:eq2}
\frac{1}{M_+^i} \sum_{m: y_m^i = 1} e^{-p f(m, u, i)} \x_m \Bigg|_{\uptheta = \uptheta^*}
= \frac{1}{M_-^i} \sum_{m': y_{m'}^i = 0} e^{f(m', u, i)} \x_{m'} \Bigg|_{\uptheta = \uptheta^*},
\ \forall i \in P_u, \, u \in \{1,\dots,U\}.
\end{equation}

Note that $\forall i \in P_u, \, u \in \{1,\dots,U\}$,
\begin{equation}
\label{eq:eq3}
\resizebox{\textwidth}{!}{$
\begin{aligned}
&\frac{\partial \, \widetilde R_{\uptheta}^{\textsc{rank}}} {\partial \, \betabm_i} \Bigg|_{\uptheta = \uptheta^*} \\
&= \frac{1}{N M_-^i} \left[
   \frac{1}{p} \left( \sum_{m: y_m^i = 1} e^{-p f(m, u, i)} \right)^{\frac{1}{p} - 1} 
   \sum_{m: y_m^i = 1} e^{-p f(m, u, i)} (-p \x_m) 
   \sum_{m': y_{m'}^i = 0} e^{f(m', u, i)} + 
   \left( \sum_{m: y_m^i = 1} e^{-p f(m, u, i)} \right)^\frac{1}{p}
   \sum_{m': y_{m'}^i = 0} e^{f(m', u, i)} \x_{m'} \right] \\
&= \frac{-1}{N M_-^i}
   \left( \sum_{m: y_m^i = 1} e^{-p f(m, u, i)} \right)^{\frac{1}{p} - 1} 
   \left[
   \sum_{m: y_m^i = 1} e^{-p f(m, u, i)} \x_m 
   \sum_{m': y_{m'}^i = 0} e^{f(m', u, i)} -
   \sum_{m: y_m^i = 1} e^{-p f(m, u, i)}
   \sum_{m': y_{m'}^i = 0} e^{f(m', u, i)} \x_{m'} \right] \\
&= \frac{-1}{N M_-^i}
   \left( \sum_{m: y_m^i = 1} e^{-p f(m, u, i)} \right)^{\frac{1}{p} - 1} 
   \left[
   \left( \sum_{m: y_m^i = 1} e^{-p f(m, u, i)} \x_m \right)
   \left( \frac{M_-^i}{M_+^i} \sum_{m: y_m^i = 1} e^{-p f(m, u, i)} \right) -
   \sum_{m: y_m^i = 1} e^{-p f(m, u, i)}
   \sum_{m': y_{m'}^i = 0} e^{f(m', u, i)} \x_{m'} \right] \\
&  \hspace{1.5em} (\text{by Eq.~\ref{eq:eq1}}) \\
&= \frac{-1}{N M_-^i}
   \left( \sum_{m: y_m^i = 1} e^{-p f(m, u, i)} \right)^\frac{1}{p}
   \left[
   \frac{M_-^i}{M_+^i}
   \sum_{m: y_m^i = 1} e^{-p f(m, u, i)} \x_m -
   \sum_{m': y_{m'}^i = 0} e^{f(m', u, i)} \x_{m'} \right] \\
&= \zero \ (\text{by Eq.~\ref{eq:eq2}}).
\end{aligned}
$}
\end{equation}

Let 
\begin{equation*}
h(u, i) 
= \frac{1}{N M_-^i} \left( \sum_{m: y_m^i = 1} e^{-p f(m, u, i)} \right)^\frac{1}{p} 
  \sum_{m': y_{m'}^i = 0} e^{f(m', u, i)},
\ \forall i \in P_u, \, u \in \{1,\dots,U\}.
\end{equation*}

Similar to Eq.~(\ref{eq:eq3}), we have
\begin{equation}
\label{eq:eq4}
\frac{\partial \, h(u, i)}{\partial \, \betabm_i} \Bigg|_{\uptheta = \uptheta^*} = \zero,
\ \forall i \in P_u, \, u \in \{1,\dots,U\}.
\end{equation}

Note that $\forall u \in \{1,\dots,U\}$,
by Eq.~(\ref{eq:eq4})
\begin{equation}
\label{eq:eq5}
\frac{\partial \, \widetilde R_{\uptheta}^{\textsc{rank}}} {\partial \, \alphabm_u} \Bigg|_{\uptheta = \uptheta^*} 
= \sum_{i \in P_u} \frac{\partial \, h(u, i)}{\partial \, \alphabm_u}  \Bigg|_{\uptheta = \uptheta^*}
= \sum_{i \in P_u} \frac{\partial \, h(u, i)}{\partial \, \betabm_i} \Bigg|_{\uptheta = \uptheta^*}
= \zero,
\end{equation}
and
\begin{equation}
\label{eq:eq6}
\frac{\partial \, \widetilde R_{\uptheta}^{\textsc{rank}}} {\partial \, \bmu} \Bigg|_{\uptheta = \uptheta^*} 
= \sum_{u=1}^U \sum_{i \in P_u} \frac{\partial \, h(u, i)}{\partial \, \bmu} \Bigg|_{\uptheta = \uptheta^*}
= \sum_{u=1}^U \sum_{i \in P_u} \frac{\partial \, h(u, i)}{\partial \, \betabm_i} \Bigg|_{\uptheta = \uptheta^*}
= \zero.
\end{equation}

Finally, by Eq.~(\ref{eq:eq3}), Eq.~(\ref{eq:eq5}), and Eq.~(\ref{eq:eq6}), $\uptheta^* \in \argmin_{\uptheta} \widetilde R_{\uptheta}^{\textsc{rank}}$.

\end{proof}

\clearpage
\newpage

\section{Evaluation metrics}
The four evaluation metrics used in this work are:
\begin{itemize}
\item \emph{HitRate@K}, which is also known as Recall@K, is the number of correctly recommended songs amongst the top-$K$ recommendations over
      the number of songs in the observed playlist.
\item \emph{Area under the ROC curve} (AUC), which is the probability that a positive instance is ranked higher than a negative instance (on average).
\item \emph{Novelty} measures the ability of a recommender system to suggest previously unknown (\ie novel) items,
      $$
      \text{Novelty@K} 
      = \frac{1}{U} \sum_{u=1}^U \frac{1}{|P_u^\textsc{test}|} \sum_{i \in P_u^\textsc{test}} \sum_{m \in S_K^i} 
        \frac{-\log_2 pop_m}{K},
      $$
      where $P_u^\textsc{test}$ is the (indices of) test playlists from user $u$, 
      $S_K^i$ is the set of top-$K$ recommendations for test playlist $i$ and $pop_m$ is the popularity of song $m$.
      Intuitively, the more popular a song is, the more likely a user is to be familiar with it, and therefore the less likely to be novel.
\item \emph{Spread} measures the ability of a recommender system to spread its attention across all possible items.
      It is defined as the entropy of the distribution of all songs,
      $$
      \text{Spread} = -\sum_{m=1}^M P(m) \log P(m),
      $$
      where $P(m)$ denotes the probability of song $m$ being recommended,
      which is computed from the scores of all possible songs using the \emph{softmax} function in this work.
\end{itemize}

\clearpage
\newpage
\twocolumn

% span two columns
\twocolumn[
\section{Dataset}
\rule{0pt}{10pt}  % gap
]

\begin{figure}[!h]
    \centering
    \includegraphics[width=.98\columnwidth]{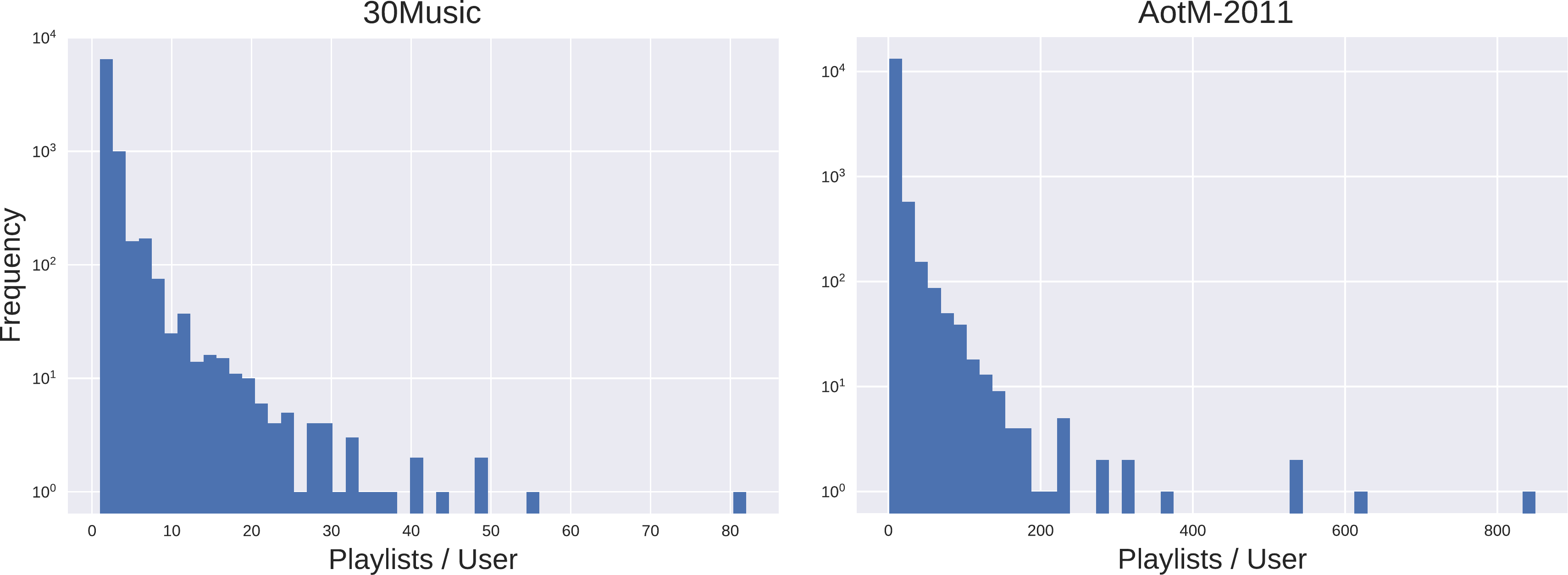}
    \caption{Histogram of the number of playlists per user}
    \label{fig:hist_pluser}
\end{figure}

\begin{figure}[!h]
    \centering
    \includegraphics[width=.98\columnwidth]{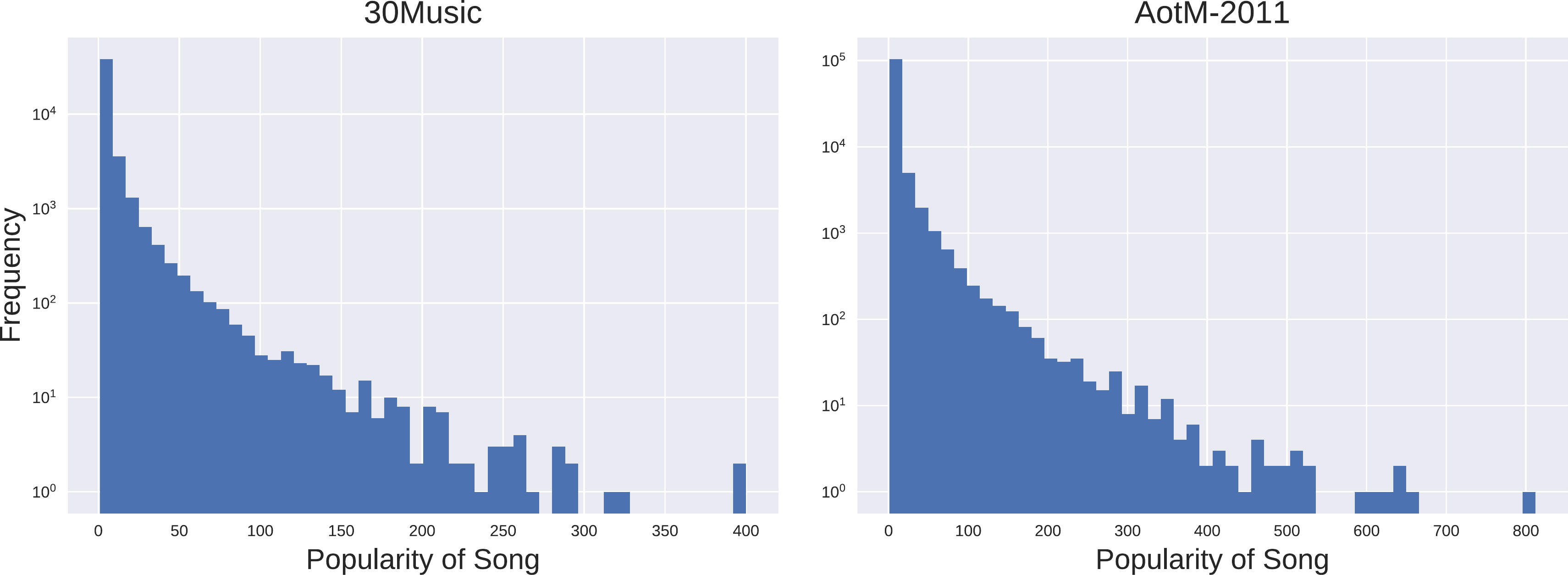}
    \caption{Histogram of song popularity}
    \label{fig:hist_songpop}
\end{figure}

The histograms of the number of playlists per user as well as song popularity 
%(\ie the number of occurrences of a song in all playlists)
of the two datasets are shown in Figure~\ref{fig:hist_pluser} and Figure~\ref{fig:hist_songpop},
respectively.
We can see from Figure~\ref{fig:hist_pluser} and Figure~\ref{fig:hist_songpop} that both the number
of playlists per user and song popularity follow a long-tailed distribution, which imposes further challenge to the learning task 
as the amount of data is very limited for users (or songs) at the tail.

\newpage

The training and test split of the two playlist datasets in the three cold-start settings are shown in 
Table~\ref{tab:stats3}, Table~\ref{tab:stats4}, and Table~\ref{tab:stats1}, respectively.

\begin{table}[!h]
    \centering
    \caption{Dataset for \emph{cold playlists}}
    \label{tab:stats3}
    \begin{tabular}{lrrcrr}
        \toprule
        \multirow{2}{*}{Dataset}  & \multicolumn{2}{c}{Training Set} && \multicolumn{2}{c}{Test Set} \\ \cmidrule{2-3} \cmidrule{5-6}
                                  & Playlists & Users && Playlists & Users \\
        \midrule
        30Music   & 15,262 &  8,070 && 2,195  & 1,644 \\
        AotM-2011 & 75,477 & 14,182 && 9,233  & 2,722 \\
        \bottomrule
    \end{tabular}
\end{table}

\begin{table}[!h]
    \centering
    \caption{Dataset for \emph{cold users}}
    \label{tab:stats4}
    \begin{tabular}{lrrcrr}
        \toprule
        \multirow{2}{*}{Dataset}  & \multicolumn{2}{c}{Training Set} && \multicolumn{2}{c}{Test Set} \\ \cmidrule{2-3} \cmidrule{5-6}
                                  & Users & Playlists && Users & Playlists \\
        \midrule
        30Music   & 5,649 & 14,067 && 2,420 & 3,390 \\
        AotM-2011 & 9,928 & 76,450 && 4,254 & 8,260 \\
        \bottomrule
    \end{tabular}
\end{table}

\begin{table}[!h]
    \centering
    \caption{Dataset for \emph{cold songs}}
    \label{tab:stats1}
    \begin{tabular}{lrrcrr}
        \toprule
        \multirow{2}{*}{Dataset}  & \multicolumn{2}{c}{Training Set} && \multicolumn{2}{c}{Test Set} \\ \cmidrule{2-3} \cmidrule{5-6}
                                  & Songs & Playlists && Songs & Playlists \\
        \midrule
        30Music   & 40,468  & 17,342 && 5,000  & 8,215 \\
        AotM-2011 & 104,428 & 84,646 && 10,000 & 19,504 \\
        \bottomrule
    \end{tabular}
\end{table}

\clearpage
\newpage
\onecolumn

% span two columns
%\twocolumn[
\section{Empirical results}
%\rule{0pt}{10pt}  % gap
%]

\begin{figure*}[!h]
    \centering
    \begin{minipage}{0.45\textwidth}
        \centering
        \includegraphics[width=\linewidth]{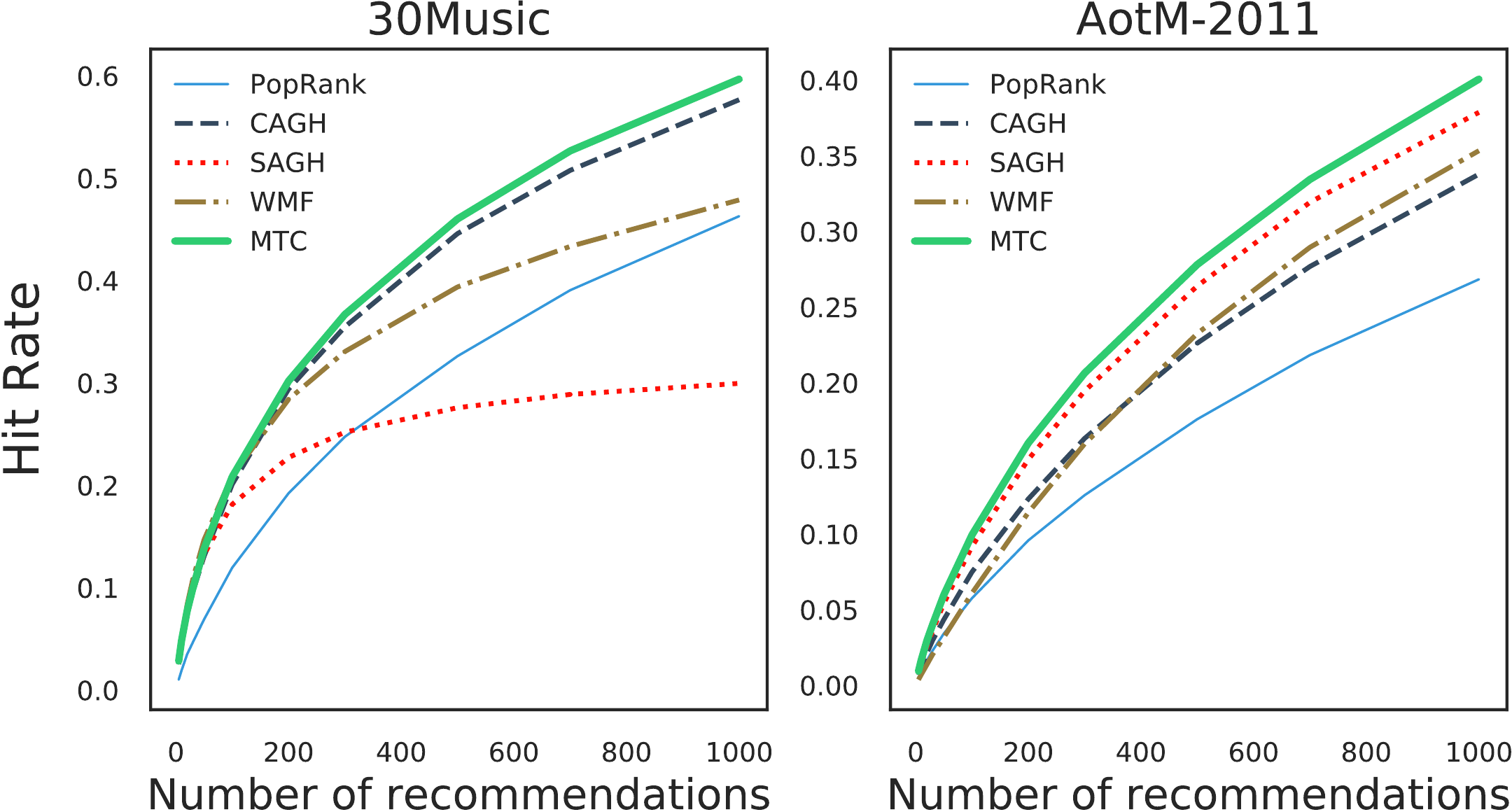}
        \caption{Hit rate of recommendation in the \emph{cold playlists} setting. \emph{Higher} values indicate better performance.}
        \label{fig:hr3}
    \end{minipage}\hspace{15pt}
    \begin{minipage}{0.45\textwidth}
        \centering
        \includegraphics[width=\linewidth]{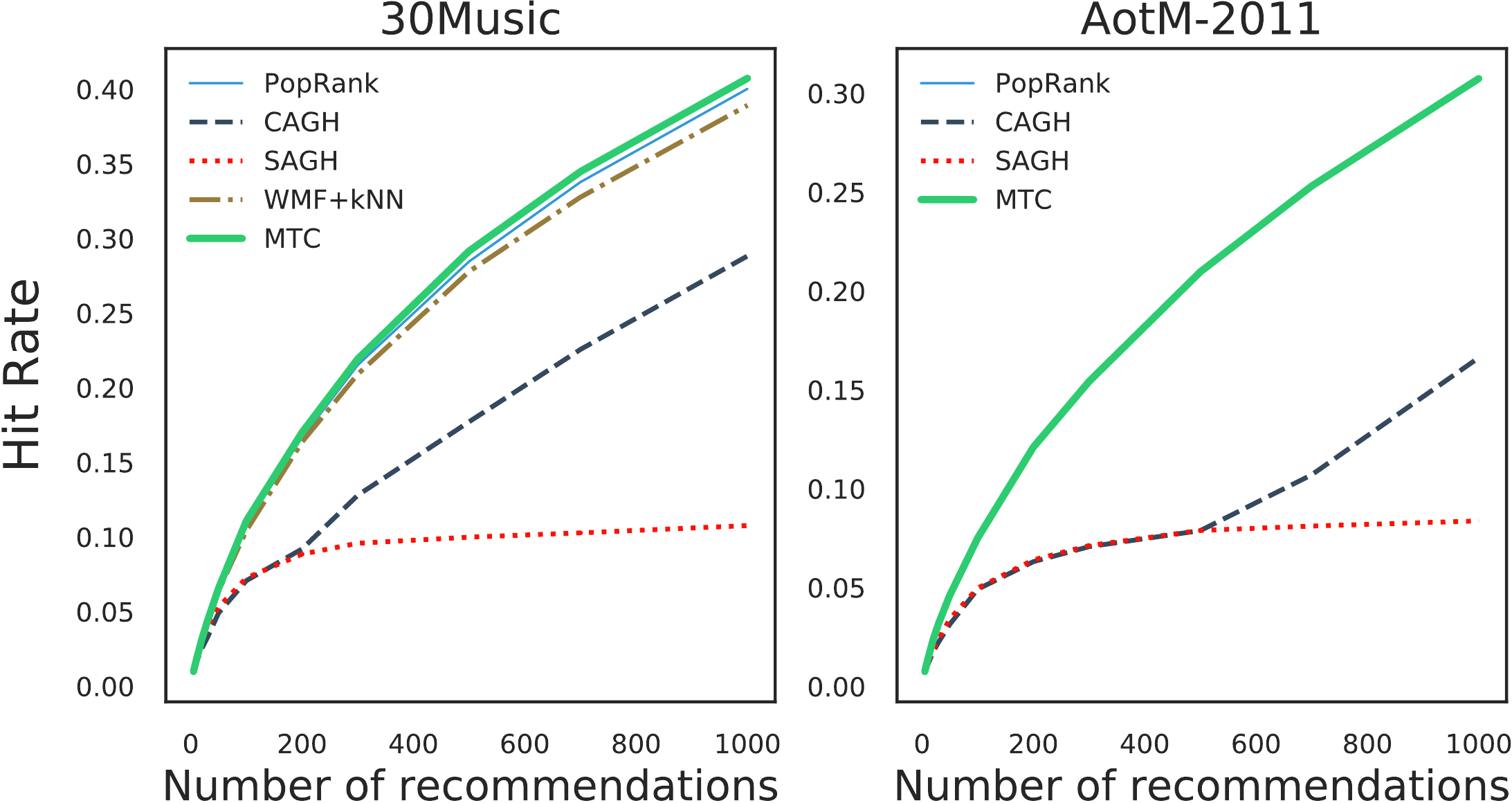}
        \caption{Hit rate of recommendation in the \emph{cold users} setting. \emph{Higher} values indicate better performance.}
        \label{fig:hr4}
    \end{minipage}
\end{figure*}

\begin{figure*}[!h]
    \centering
    \begin{minipage}{0.45\textwidth}
        \centering
        \includegraphics[width=\columnwidth]{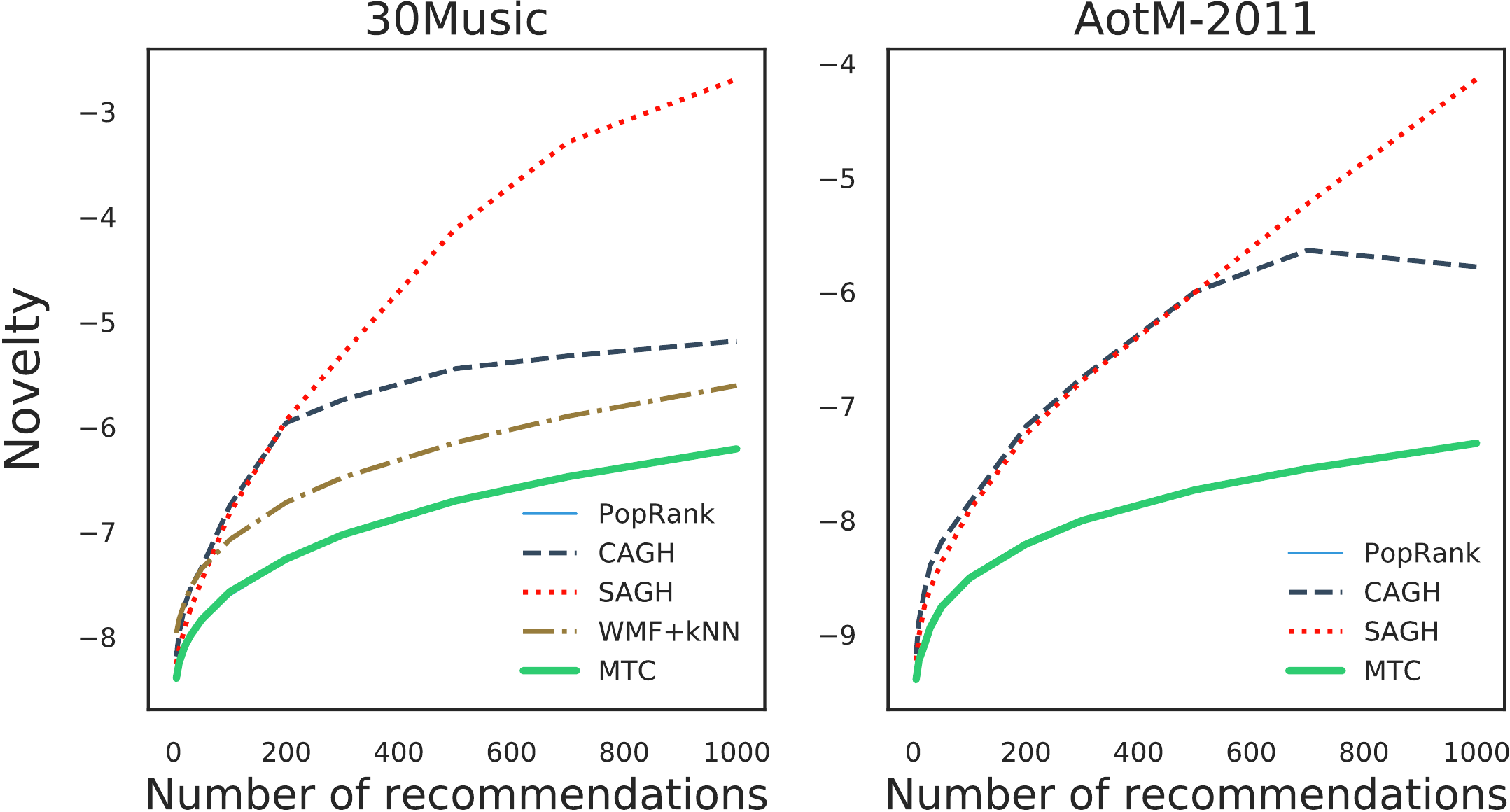}
        \caption{Novelty of recommendation in the \emph{cold users} setting. \emph{Moderate} values are preferable.}
        \label{fig:nov4}
    \end{minipage}\hspace{15pt}
    \begin{minipage}{0.45\textwidth}
        \centering
        \includegraphics[width=\columnwidth]{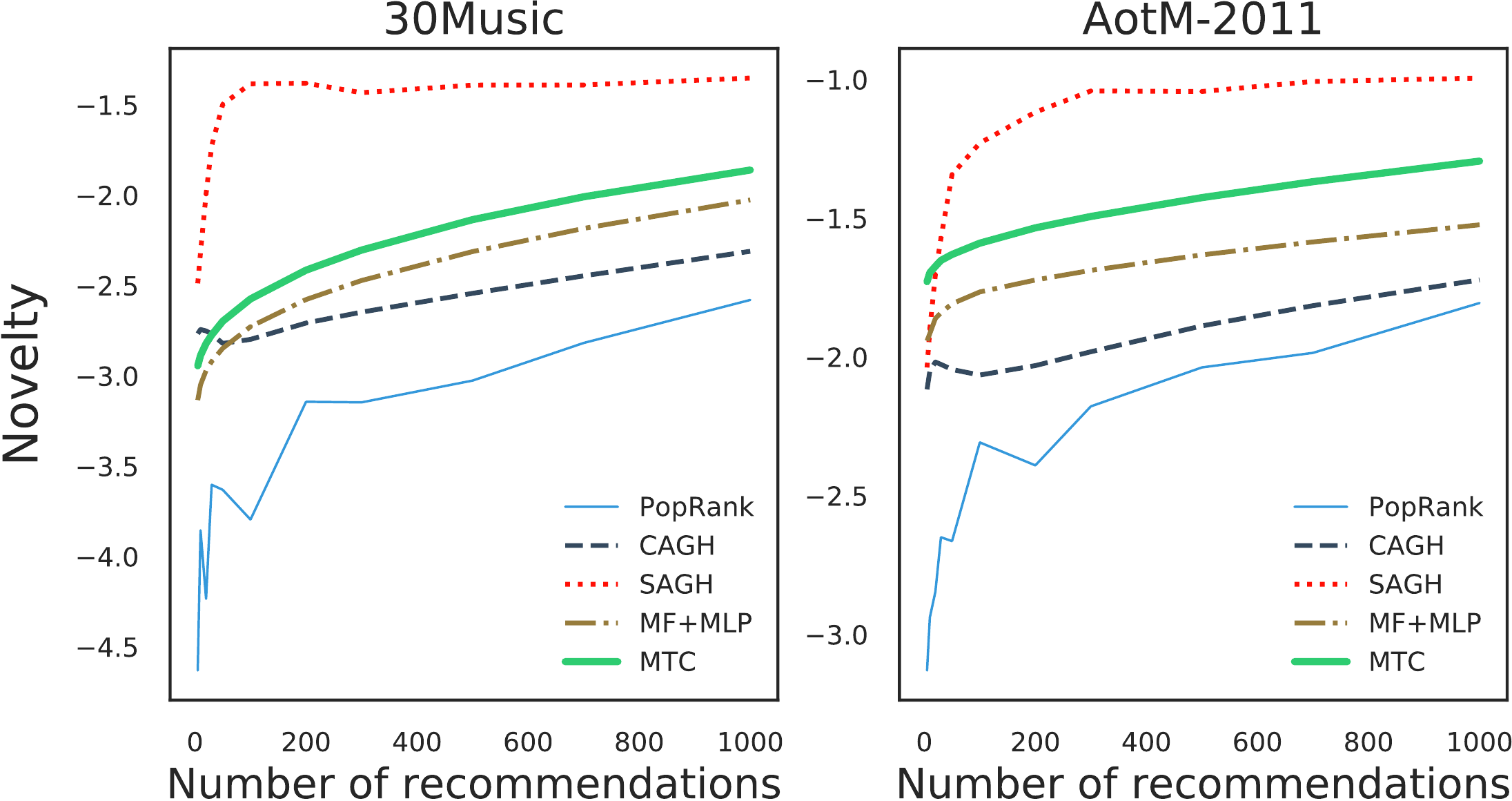}
        \caption{Novelty of recommendation in the \emph{cold songs} setting. \emph{Moderate} values are preferable.}
        \label{fig:nov1}
    \end{minipage}
\end{figure*}

\clearpage
\newpage
\onecolumn

%\section{Notations}
\section{Notations}

We introduce notations in Table~\ref{tab:notation}.
\begin{table}[!h]
\caption{Notations used in this paper}
\label{tab:notation}
\renewcommand{\arraystretch}{1.5} % tweak the space between rows
\setlength{\tabcolsep}{1pt} % tweak the space between columns
\centering
\begin{tabular}{llll}
\toprule
\multicolumn{3}{l}{\textbf{Notation}} & \textbf{Description} \\ \midrule
$D$        &  $\in$  &  $\Z^+ \rule{10pt}{0pt}$  & The number of features for each song \\
$M$        &  $\in$  &  $\Z^+$            & The number of songs, indexed by $m, m' \in \{1,\dots,M\}$ \\
$M_+^i$    &  $\in$  &  $\Z^+$            & The number of songs in playlist $i$ \\
$M_-^i$    &  $\in$  &  $\Z^+$            & The number of songs not in playlist $i$, \ie $M_-^i = M - M_+^i$ \\
$N$        &  $\in$  &  $\Z^+$            & The total number of playlists from all users \\
$U$        &  $\in$  &  $\Z^+$            & The number of users, indexed by $u \in \{1,\dots,U\}$ \\
$P_u$      &         &                    & The set of indices of playlists from user $u$ \\
$\alphabm_u$   &  $\in$  &  $\R^D$        & The weights of user $u$ \\
$\betabm_i$  &  $\in$  &  $\R^D$          & The weights of playlist $i$ from user $u$, $i \in P_u$ \\
$\bmu$     &  $\in$  &  $\R^D$            & The weights shared by all users (and playlists) \\
$\w_{u,i}$ &  $\in$  &  $\R^D$            & The weights of playlist $i$ from user $u$, $\w_{u,i} = \alphabm_u + \betabm_i + \bmu$ \\
%$\Y$       &  $\in$  &  $\R^{M \times N}$ & The matrix of binary labels that indicating if a song is in a playlist \\
$y_m^i$    &  $\in$  &  $\R^D$            & The positive binary label $y_m^i = 1$, \ie song $m$ is in playlist $i$ \\
$y_{m'}^i$ &  $\in$  &  $\R^D$            & The negative binary label $y_{m'}^i = 0$, \ie song $m'$ is not in playlist $i$ \\
%$\X$       &  $\in$  &  $\R^{M \times D}$ & The matrix of features of all songs \\
$\x_m$     &  $\in$  &  $\R^D$            & The feature vector of song $m$ \\
%$\x_{m'}$  &  $\in$  &  $\R^D$            & The feature vector of song $m'$ \\
\bottomrule
\end{tabular}
\end{table}

\end{document}